\definecolor{light-gray}{gray}{0.8}
\lstdefinestyle{c_bash}{
 language=bash,
 commentstyle=\color{black},
 columns=flexible
}
\lstdefinestyle{c_text}{
  backgroundcolor=\color{white},
  breaklines=true,
  captionpos=b,
  columns=flexible,
  extendedchars=true,
  frame=single,
  keepspaces=true,
  language={},
  numbers=left,
  xleftmargin=10pt,
  numbersep=5pt,
  numberstyle=\small\color{black},
  rulecolor=\color{black},
  rulesepcolor=\color{black},
  showspaces=false,
  showstringspaces=false,
  showtabs=false,
  stepnumber=1,
  tabsize=2,
  title=\lstname
}
\lstdefinestyle{c_bash}{
 backgroundcolor=\color{light-gray},
 language=bash,
 commentstyle=\color{black},
 columns=flexible
}
\newtheorem{thm}{Theorem}
\begin{document}

\title{Protocol Proxy: \\An FTE-based Covert Channel}

\begin{abstract}
In a hostile network environment, users must communicate without being detected. This involves blending in with the existing traffic. In some cases, a higher degree of secrecy is required. We present a proof-of-concept format transforming encryption (FTE)-based covert channel for tunneling TCP traffic through \textit{protected static} protocols. Protected static protocols are UDP-based protocols with variable fields that cannot be blocked without collateral damage, such as power grid failures. We (1) convert TCP traffic to UDP traffic, (2) introduce observation-based FTE, and (3) model interpacket timing with a deterministic Hidden Markov Model (HMM).  The resulting Protocol Proxy has a very low probability of detection and is an alternative to current covert channels.  We tunnel a TCP session through a UDP protocol and guarantee delivery. Observation-based FTE ensures traffic cannot be detected by traditional rule-based analysis or DPI. A deterministic HMM ensures the Protocol Proxy accurately models interpacket timing to avoid detection by side-channel analysis. Finally, the choice of a \textit{protected static} protocol foils stateful protocol analysis and causes collateral damage with false positives. 
\end{abstract}

\begin{keyword}
Covert Channel, Format Transforming Encryption (FTE), Steganography, Traffic Analysis, Deep Packet Inspection (DPI), Pluggable Transport (PT), Deterministic Hidden Markov Model (HMM), Synchrophasor
\end{keyword}

\begin{frontmatter}

\author{Jonathan~Oakley\corref{cor1}}
\ead{joakley@g.clemson.edu}

\author{Lu~Yu}

\author{Xingsi~Zhong}

\author{Ganesh~Kumar~Venayagamoorthy}

\author{Richard~Brooks}

\cortext[cor1]{Corresponding Author}
\address{Department of Electrical and Computer Engineering, Clemson University, Clemson SC, USA}
\end{frontmatter}

\section{Introduction}
Traffic analysis classifies network traffic using observable information. Network engineers use traffic analysis to ensure quality of service and identify threats. As a result, the development of hardware and software tools that quickly and effectively classify traffic has been encouraged.  Commercial traffic analysis tools are used by governments to block access to websites that counter their current narrative \cite{heydari2017scalable}. Tools for countering traffic analysis have been developed for both criminal use and covert channels.

Tor is a popular overlay network that routes traffic through three randomly chosen nodes on the Internet. Tor uses nested encryption to ensure messages cannot be intercepted. The client encrypts packets. Each relay node decrypts the outermost layer, revealing another encrypted layer for the next hop to decrypt. By wrapping encryption (like the layers of an onion), it is possible to encrypt traffic so  each node only knows its neighbors. This is not a silver bullet. In contested network environments, it is easy to detect and block Tor \cite{Dingledine2011}. To prevent blocking, pluggable transports (PTs) were developed to obfuscate Tor's traffic patterns.

PT developers must ensure their tools are able to penetrate nation state firewalls while authoritarian governments must determine the optimal defense \cite{garnaev2016security}. Some popular PTs simply wrap encrypted traffic with a new header to allow TLS traffic to pass through firewalls \cite{obfs4,wiley2011dust}. At first, this may seem like an elegant solution, but it is simple to add another firewall rule to block this traffic. This is security through obscurity.

Format Transformation Encryption (FTE) is a form of steganography that translates network traffic into a host protocol\footnote{The host protocol refers to the protocol being mimicked}. Previous FTE implementations used regular expressions \cite{dyer2013protocol} and context free grammars \cite{Dyer2015}. Padding and rerouting has obfuscated traffic and removed side-channels \cite{guan2001netcamo}. In previous work, we used FTE and hidden Markov models (HMMs) to translate traffic flows into DNS requests and responses \cite{fu2016covert,fu2017stealthy} and smart grid sensor traffic \cite{zhong2015stealthy}. Fridrich determined an upper bound on the amount of information that could be steganographically encoded in JPEG images before distortions were visually detected \cite{fridrich2006minimizing}. HMMs have some notable advantages:

\begin{enumerate}
\item data windowing of HMMs \cite{schwier2011methods} makes them effective for both protocol detection and mimicry,
\item tools for differentiating HMMs are well defined \cite{schwier2011methods}, and
\item a normalized metric space can directly measure the quality of protocol mimicry \cite{lu2013normalized}.
\end{enumerate}

We propose security through collateral damage.  Certain protocols are more expensive to block than others. Usually, blocking the wrong TLS stream has little collateral damage other than disgruntled users. 

We chose Synchrophasor traffic for our FTE implementation, but another example is Network Time Protocol (NTP) traffic.  We use Synchrophasor traffic because we have access to Clemson's Real-Time Power and Intelligent System (RTPIS) Laboratory \cite{powerlab} and real Synchrophasor traffic.\footnote{Synchrophasor traffic is a UDP-based protocol generated by Phasor Measurement Units (PMUs) that contains alternating current phase measurements. This protocol is used to balance the load at different points in the power grid.}  To accomplish this transformation, we made the following novel contributions:

\begin{enumerate}
\item An architecture to tunnel TCP traffic through UDP traffic.
\item Real-time observation-based format transforming encryption (FTE) \cite{zhong2015stealthy}.
\item A theoretical upper bound on the channel capacity of observation-based FTE.
\item Emulating the packet timing of a host protocol.
\item A proxy capable of tunneling SSH (TCP) through power-grid Synchrophasor traffic (UDP) in a statistically indistinguishable manner.
\end{enumerate}

In Section~\ref{sec:related}, we introduce related work. In Section~\ref{sec:undetectable}, we justify observation-based FTE as a undetectable communication channel. In Section~\ref{sec:hmm}, we provide the mathematical background behind deterministic HMMs. In Section~\ref{sec:arch}, we provide the system architecture and justify our design decisions. Figure~\ref{fig:abstract} shows our high-level system architecture: Section~\ref{sec:arch-fte} describes observation-based format transforming encryption using our novel method. Section~\ref{sec:arch-timing} describes massaging packet timing with a deterministic HMM inferred using the method described in Section~\ref{sec:related-inferring}, and Section~\ref{sec:arch-converter} describes how everything fits the overall Protocol Proxy architecture. Section~\ref{sec:exp} provides the experimental setup, and Section~\ref{sec:results} details our results. Finally, we provide our closing thoughts and future work in Section~\ref{sec:conclusion}.

\begin{figure*}[!t]
\begin{center}
\includegraphics[scale=1]{./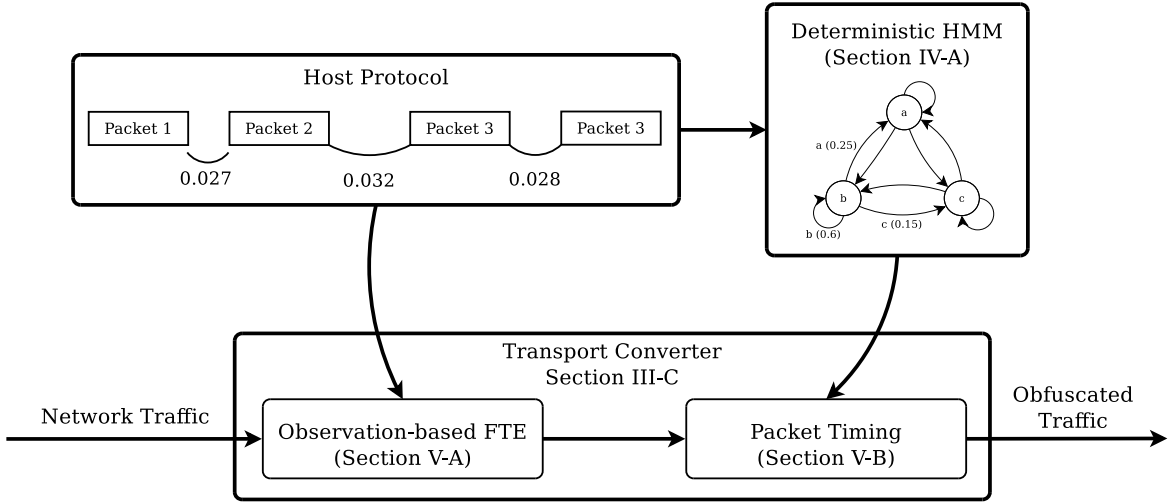}
\caption{Protocol Proxy architecture with relevant sections indicated.} \label{fig:abstract}
\end{center}
\end{figure*}

\section{Related Work} \label{sec:related}
Creating covert online communication tools has been the focus of many privacy advocacy groups. Since the data in covert channels is encrypted, the goal is to balance probability of detection with throughput based on the desired application \cite{smith2010predictable}.  Timing side-channels are used when low probability of detection is prioritized over throughput \cite{kiyavash2013timing,yao2009study}. 

Named Data Networking (NDN) posits an alternative internet architecture based on content delivery \cite{zhang2014named}. Consumers express \textit{interest} in a topic to NDN routers. These NDN routers check their \textit{Content Stores} to see if an interest is cached. If the information is not available, the router adds the interest to a \textit{Pending Interest Table} and forwards the interest upstream according to the \textit{Forwarding Information Base} and \textit{Forwarding Strategy}. Tsudik et. al proposed an anonymous communication network using NDN \cite{tsudik2016ac3n}. Cui et al. proposed a model for preventing censorship using \textit{smart} NDN routers \cite{cui2016defend}. Neither of these solutions addresses the issue of covert communications in a contested environment.

Ambrosin et. al proposed a method for delay-based covert communication using cache techniques \cite{ambrosin2014covert}. Given a sender and receiver share an NDN router at some point in the network, the sender and receiver can communicate using the round trip time (RTT) of the receiver's interest requests. The sender and receiver agree on $C_0$ and $C_1$ out-of-band. The sender requests a certain interest, $C_b$, and the receiver receives the message by requesting both $C_0$ and $C_1$. By comparing the RTT of both $C_0$ and $C_1$, it is possible to determine which interest the sender requested--$C_b$ will have a shorter RTT since it already exists in the router's cache. While this covert channel is interesting, it would be easy to detect in Iran or China since CCNx (the NDN implementation) is not widely used. The traffic would be anomalous in that environment and could be used to identify users before being blocked  \cite{mosko2014ccnx}.

Tor's \cite{tor} anonymity network wraps network traffic in layers of encryption. Each layer can only be decrypted by the next hop in the \textit{onion} network. While it provides anonymous access to the Internet, the Tor protocol is easy to detect and block \cite{winter2012great,Dingledine2011}. Undetectable communication was not one of Tor's goals, but it spawned the Pluggable Transport project to address this challenge and encourage the development of other covert communication tools \cite{ptv2}.

Pluggable Transports (PTs) \cite{pluggable-transports} address this concern. PTs offer a generic way to obfuscate traffic. \textit{Shape-shifting} PTs transform traffic into a different protocol. SkypeMorph \cite{Mohajer2012} makes network traffic resemble a Skype session. StegoTorus \cite{Weinberg2012} demultiplexes connections to avoid traffic analysis and uses steganography to hide information in different protocols (including Skype). In \textit{The parrot is dead: Observing unobservable network communications},  Houmansadr et al. \cite{houmansadr2013parrot} found  both approaches fell short of true protocol mimicry. In both cases, handshake packets were incorrect. Other flaws were noted with StegoTorus's implementation of HTTP steganography \cite{houmansadr2013parrot}. Censorspoofer mimics the Ekiga VoIP software, but it also falls short of mimicking protocol intricacies \cite{houmansadr2013parrot}.

A number of PTs \textit{scramble} traffic to remove fingerprints. Obfs2 \cite{obfsproxy}, Obfs3 \cite{obfsproxy}, Obfs4 \cite{obfs4}, and ScrambleSuite \cite{winter2013scramblesuit} each attempt to remove a network fingerprint by \textit{scrambling} the data. Dust2 and its previous version (Dust) change statistical properties of traffic to bypass firewalls \cite{wiley2011dust}. With technologies like software defined networking (SDN), these statistical PTs will likely be blocked by adaptive firewalls. 

Recent PTs use \textit{domain fronting}. Traffic is sent to a benign destination (Google, Amazon, Azure, etc.) and allowed through the firewall because blocking such a large domain would cause unintended collateral damage. FlashProxy \cite{moshchuk2008flashproxy}, SnowFlake\cite{snowflake}, and meek \cite{fifield2015blocking} all use variations of \textit{domain fronting}.  This approach has been successful but is not condoned by companies whose domains are being used since it exposes them to potential backlash. 

FTE PTs are a subset of \textit{shape-shifting} PTs that steganographically encode traffic using values typical of the host protocol \cite{dyer2013protocol}. It is best to use a widely adopted protocol, such as DHCP \cite{rios2013covert} or VoIP \cite{schmidt2018exploiting}. Marionette \cite{Dyer2015} is a \textit{shape-shifting} PT that uses a probabilistic context-free grammar (PCFG) and production rules to mimic the host protocol. The PCFG ensures traffic is syntactically and semantically correct and production rules occur at the expected frequency. Determining the appropriate PCFG to model a protocol is an open research question \cite{dyer2013protocol, Dyer2015}.  Marionette ensures interpacket timing, packet size, and session count mimic the host protocol.  

Refraction Networking (TapDance) \cite{Wustrow2014} spoofs the destination IP address. \textit{If} the packet is routed through a decoy router, the true destination IP address is substituted for the spoofed address. Recent work has shown  it may be inexpensive to censor decoy routers \cite{schuchard2012routing}. Alternatively, TARN \cite{yu2017tarn} provides an approach that mixes traffic from different autonomous systems at the software defined exchange (SDX) level. This provides a high level of anonymity and is resistant to a malicious ISP or BGP injection, but it is not realistic for a covert channel. Network-based moving target defense solutions have also been proposed \cite{heydari2017scalable} for covert channels.

GNUnet \cite{gnunet}, I2P \cite{i2p}, and Freenet \cite{freenet} all seek to provide anonymous access to the Internet. GNUnet is a toolbox for developing secure decentralized applications, but widespread censorship is possible \cite{kugler2003analysis}. I2P uses garlic routing (an onion-based routing protocol) to route traffic securely, but I2P is meant to be a self-contained network. I2P can be blocked if an adversary controls a small number of routers in the network and uses traditional IP-based filtering \cite{hoang2018empirical}. I2P routers can be identified because hiding I2P traffic was not a design goal \cite{i2p-threat}. Freenet focuses on using prior knowledge to form connections, and it is possible to passively (or actively) scan the network for nodes \cite{roos2014measuring}. It arguably provides more anonymity, but it is resource intensive. Many governments block access to these tools, which makes the first hop important.

Traditional Virtual Private Networks (VPNs)  are not usually effective in a contested environment because encrypted data can indicate malicious activity \cite{iran}. As a result, Psiphon \cite {psiphon}, Lantern \cite{lantern}, and Ultrasurf \cite{ultrasurf} have started using PTs. With Lantern, traffic is only forwarded through the PT if it is likely to be blocked.

Image steganography is also an effective means of covert communication. Fridrich investigated the relationship between distortion and information capacity \cite{fridrich2006minimizing}. Unfortunately, the model derived in \cite{fridrich2006minimizing} does not directly apply to FTE-based covert channels.

\subsection{Previous Work}
This article extends the work by Zhong et al. in \cite{zhong2015stealthy} where network traffic was manipulated offline as a proof-of-concept. The Protocol Proxy architecture presented in this work is a novel contribution designed to address the issues that result from real-time traffic manipulation. The observation-based FTE algorithm was enhanced from \cite{zhong2015stealthy} to increase throughput. In this work, we present a theoretical bound on the information that can be encoded using observation-based FTE. Zhong et al. \cite{zhong2015stealthy} manipulated packet timing in an offline proof-of-concept by setting the packet timestamp in the PCAP file. In this work, we manipulate packet timing in real-time, which creates additional challenges that are addressed by the novel Protocol Proxy architecture.  Zhong et al. \cite{zhong2015stealthy} did not consider guaranteed delivery for a UDP protocol, which must be addressed when manipulating TCP traffic in realtime. Finally, while Zhong et al. \cite{zhong2015stealthy} used HMMs to manipulate packet timing, they did not consider the need for formal model verification.

\section{Undetectability} \label{sec:undetectable}

Detecting protocol mimicry can be done in several ways: rule-based analysis, deep packet inspection (DPI), stateful protocol analysis, side-channel analysis, and statistical analysis.  If a packet matches a set of predetermined rules, then it is accepted (or rejected). Rules typically look only at the packet headers. They concentrate on IP source, IP destination, source port, destination port, protocol type (TCP or UDP), and several other fields. These fields are available in the packet header without the need for DPI, so this is the first line of defense for high throughput use cases. If packets are destined for IP addresses that belong to a malicious web site, they will be dropped before they leave their respective autonomous systems. Source and destination ports are also used to classify traffic \cite{kim2008internet}.  

DPI classifies traffic by analyzing the protocol payload. Recent developments allow primitive classification of HTTPS (encrypted) traffic \cite{miller2014know}. It has been shown  FTE can bypass both rule-based analysis and DPI \cite{dyer2013protocol, Dyer2015}.

``Stateful'' firewalls avoid certain attacks by only permitting traffic if the traffic obeys the underlying protocol. For instance, TCP traffic is required to complete the handshake before data packets are allowed through the firewall. Housmansadr et al. \cite{houmansadr2013want} used an extension of this idea to identify protocol mimicry. By classifying the states of a host protocol, it is possible to identify poor imitations by identifying discrepancies between observed and expected states. Housmansadr et. al used this to find where SkypeMorph differed from Skype. Once differences were identified, simple rules can identify SkypeMorph traffic. 

Hidden Markov Models (HMMs) have been used in side-channel analysis to identify Synchrophasor traffic in an encrypted VPN \cite{zhong2015side}. HMM inference occurs offline, and it requires a large sample of traffic to build the timing model of the unidentified protocol. In the future, HMM detection will likely be performed online. 

Statistical analysis is a hybrid approach that attempts to use statistical properties to identify traffic. Chaos theory is one approach to statistical detection \cite{zhao2012detecting}. Entropy has also been used with distributed denial of service (DDoS) detection, but entropy can easily be spoofed \cite{ozccelik2015deceiving}. We use statistical analysis to compare the timing models of the original traffic with the traffic we generated.

Our variation of protocol mimicry uses a \textit{protected static} protocol. We use this term to refer to a specific subset of protocols that are prime candidates for protocol mimicry. Static protocols are UDP-based and lack application-layer handshakes (like those in Skype), making them immune to stateful analysis. The final layer of security is choosing protocols that are \textit{protected}. These protocols have high collateral damage for false positives. If Synchrophasor packets are dropped, it can have adverse consequences for the power grid.

\section{Hidden Markov Models} \label{sec:hmm}
A Markov model is a tuple $G = (S, T, P )$ where $S$ is a set of states of a model, $T$ is a set of directed transitions between the states, and $P=\left\{ p(s_i,s_j)\right\} $ is a probability matrix associated with transitions from state $s_i$ to $s_j$ such that:
\begin{equation}
\sum\nolimits_{s_j \in S} p(s_i,s_j) = 1, \forall s_i \in S
\end{equation}

A Markov model satisfies the Markov property, where the next state only depends on the current state. An HMM is a Markov model with unobservable states. A standard HMM \cite{eddy1996hidden, rabiner1989tutorial} has two sets of random processes: one for state transition and the other for symbol outputs. HMMs have been used to effectively model time series data \cite{asadi2016creating}. A deterministic HMM \cite{lu2013normalized, lu2012network, schwier2009pattern} is used in this paper, and it has one random processes for state transitions. Different output symbols are associated with transitions with different probability. This representations is equivalent to the standard HMM \cite{vanluyten2008equivalence,lu2013normalized}.

\subsection{Inferring Deterministic HMMs} \label{sec:related-inferring}
Deterministic HMM inference is depicted in Figure~\ref{fig:hmm-process}. A stream of network packets, Figure~\ref{fig:hmm-process-a}, is observed. The interpacket delay (time between each packet) is calculated, and the values are plotted in a histogram. This histogram is grouped into different states, and these states manifest themselves as peaks in the histogram. In Figure~\ref{fig:hmm-process-b}, there are three peaks. Each peak is given a unique label. The stream of interpacket delays is re-interpreted using the assigned labels. A stream of labels, as shown in Figure~\ref{fig:hmm-process-c}, is used to infer the deterministic HMM shown in Figure~\ref{fig:hmm-process-d}. Each state in the HMM corresponds to a label. The probability of an `a' output expression in state `b' is given by the number of occurrences of the string `ba' divided by the number of occurrences of the string `b'. If there were 1000 occurrences of the string `b', and we know the string `ba' occurred 250 times, then 25\% of the time we transitioned to state `a'. The full process for inferring deterministic HMMs is provided in \cite{griffin2011hybrid, schwier2009zero}. Given a deterministic HMM, it is possible to generate a stream of packet timings.

\begin{figure}
\centering
\subfloat[Stream of incoming packets with timing denoted.]{\includegraphics[width=0.45\textwidth]{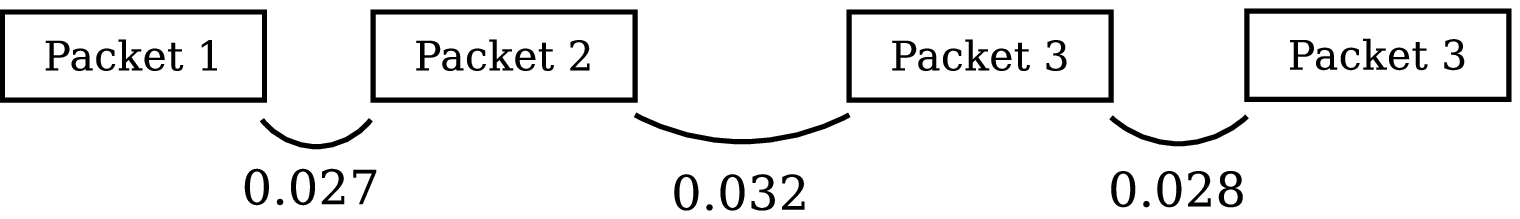}\label{fig:hmm-process-a}}\\
\subfloat[The timing values plotted in a histogram.]{\includegraphics[scale=.3]{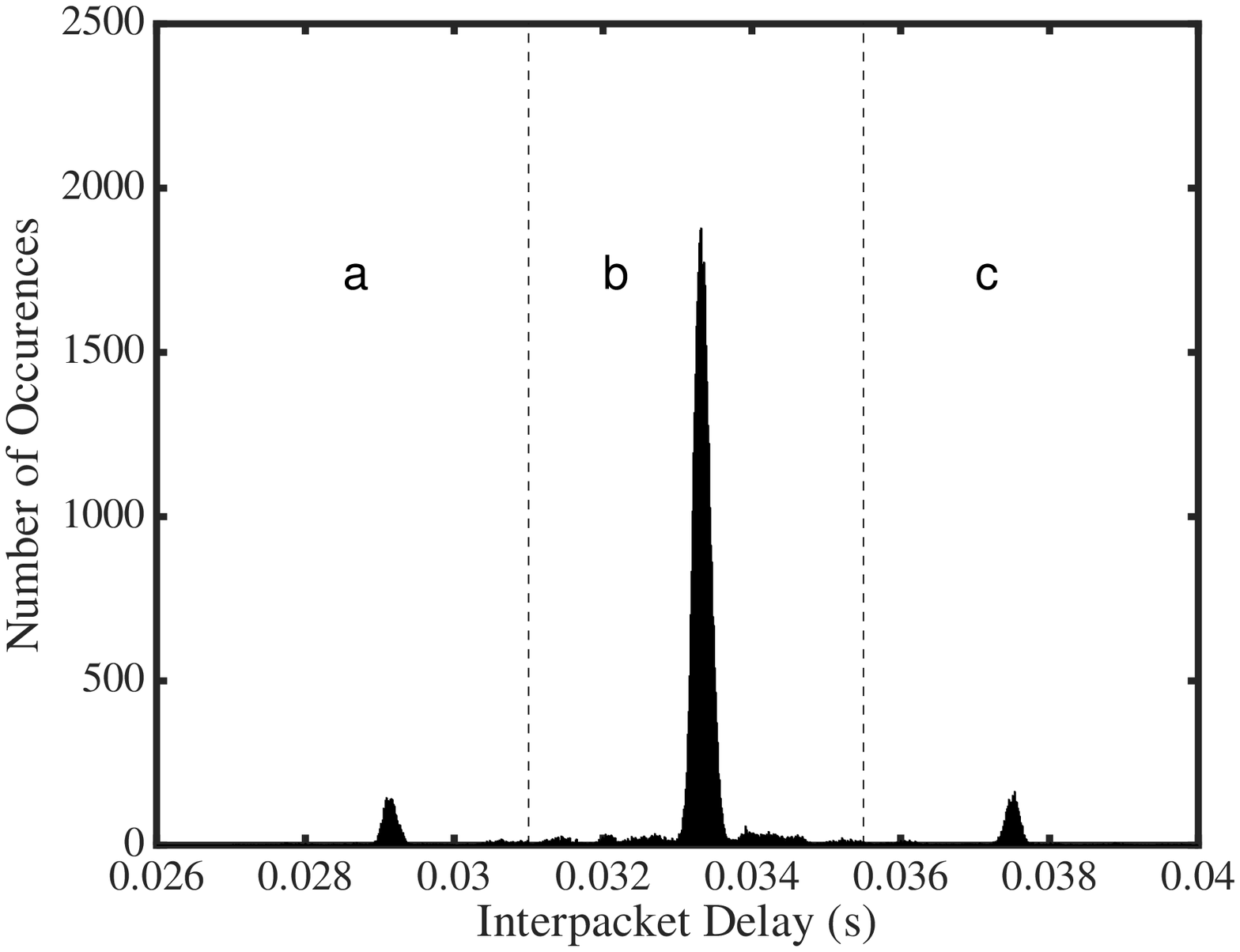} \label{fig:hmm-process-b}}\\
\subfloat[The packet timings converted to a stream labels.]{\includegraphics[scale=.3]{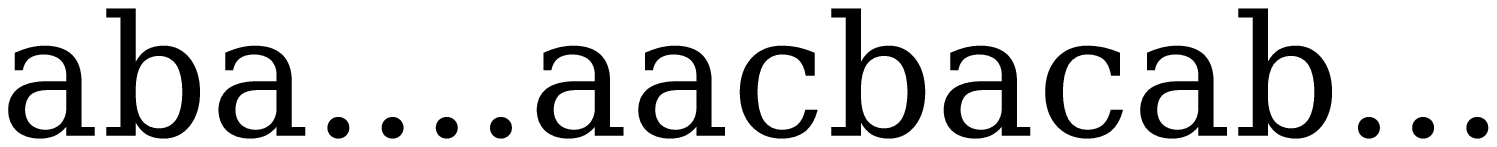}\label{fig:hmm-process-c}}\\
\subfloat[The deterministic HMM inferred from the stream of labels.]{\includegraphics[scale=.3]{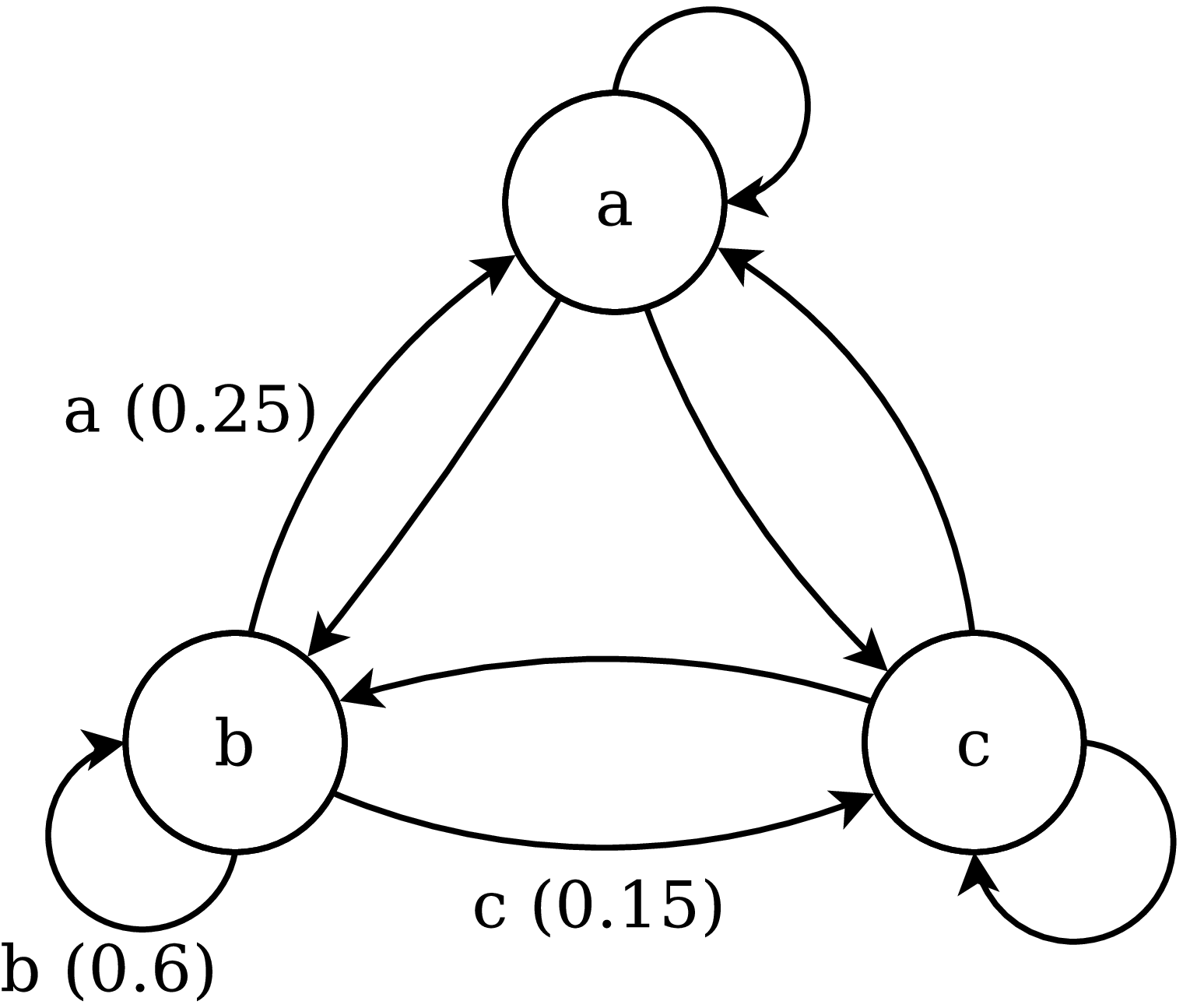}\label{fig:hmm-process-d}}
\caption{Detailed example of how a deterministic HMM is inferred from packet timing.} 
\label{fig:hmm-process}
\end{figure}

\subsection{Comparing Deterministic HMMs}
In \cite{lu2013normalized}, the authors develop a normalized metric space for comparing HMMs, and in \cite{yu2013inferring} the authors show a method for ensuring an inferred HMM is significant. We use an alternative approach that is tailored to this challenge. Before determining whether two deterministic HMMs are equal, it is desirable to ensure the probability distribution functions (PDFs) used to generate the HMM are equal. To do this, we use the two-sample Kolmogornov-Smirnov (KS) test \cite{kstest2}, which tests the null hypothesis (two sets of samples come from the same underlying distribution) against the alternate hypothesis (two sets of samples come from different underlying distributions). The KS statistic is the empirical distribution function $F_n$, defined below.

\begin{equation}\label{eq:ks}
F_n(x) = \frac{1}{n}\sum_{i=1}^{n}I_{(-\infty,x]}(X_i)
\end{equation}

Here, $n$ refers to the number of identically independently distributed samples ($X_i$) taken from the sample space ($X$). Samples ($X_i$) are randomly chosen observations from Figure~\ref{fig:hmm-process-a}. The indicator function, $I_{(-\infty,x]}(X_i)$, is defined in Equation~\eqref{eq:indicator}.

\begin{equation}\label{eq:indicator}
I_{[-\infty,x]}(X_i) = \left\{\begin{split}
&1, \quad &X_i < x \\
&0, \quad &\textrm{otherwise} \\
\end{split}\right.
\end{equation}

The two-sample KS test compares the distance between the two empirical distribution functions using Equation~\eqref{eq:ks2}.

\begin{equation}\label{eq:ks2}
D_{n,m} = \sup_x | F_{1,n}(x) - F_{2,m}(x)|
\end{equation}

The null hypothesis is rejected at the $95\%$ confidence level if the following criterion is met.

\begin{equation}\label{eq:ks-reject}
D_{n,m} > 1.36 \sqrt{\frac{n+m}{nm}}
\end{equation} 

To determine whether two deterministic HMMs are equivalent, it is sufficient to show  all corresponding states in the deterministic HMM are equivalent. If all states are equivalent, the HMMs are equivalent. To show  two states of a deterministic HMM are equivalent,  we use the $\chi^2$ test for homogeneity to test if the probability distributions for outgoing state transitions are statistically equivalent. The generic expression for the $\chi^2$ statistic for homogeneity given $P$ populations and $C$ levels of the categorical variable is shown below.

\begin{equation}
\chi^2 = \sum_{i\in P}\sum_{j\in C} \frac{\left(O_{i,j} - E_{i,j}\right)^2}{E_{i,j}}
\end{equation}

In this representation, $O_{i,j}$ is the number of occurrences observed in the state corresponding to $i$ and the output expression corresponding to $j$. Similarly, $E_{i,j}$ is the number of \textit{expected} occurrences for the combination of state and output expression. The expected number of occurrences is calculated as shown below in Equation~\eqref{eq:expected}.

\begin{equation}\label{eq:expected}
E_{i,j} = \frac{n_in_j}{n}
\end{equation}

Here, $n_i$ is the number of observations in state $i$, $n_j$ is the number of observations at that level of the categorical variable, and $n$ is the sample size.  For threshold testing, the degrees of freedom ($DF$) is given as follows.

\begin{equation}
DF = (P - 1)(C - 1)
\end{equation}

In this work, we compare two states (populations), so $P$ is 2. Therefore, the $DF$ for any given state is simply the number of output expressions ($C$) minus one.

\section{Architecture} \label{sec:arch}
Converting TCP-based Tor traffic to the UDP Synchrophasor protocol requires a number of building blocks that were not present in \cite{zhong2015stealthy}. The TCP packet must be converted to several Synchrophasor packets. The packet timing of outgoing packets must be adjusted to model the timing of Synchrophasor traffic. A transport layer converter is required to tunnel TCP-based protocols through UDP protocols while still maintaining TCP's guaranteed delivery. Finally, these building blocks can be linked together to provide a proof-of-concept that can convert Tor's TCP traffic to UDP Synchrophasor traffic.

\subsection{Observation-based FTE} \label{sec:arch-fte}
Simply sending UDP packets to a specific port isn't enough. Capturing the packet in an analysis tool like Wireshark \cite{wireshark} will reveal the packet is malformed. While this rises to the level of existing obfuscation PTs, it does not solve the problem. Traditional FTE takes the syntax of a protocol and creates a PCFG to map raw binary data to that protocol's syntax \cite{Dyer2012}. Determining the appropriate PCFG to model a protocol is left as an open research question, which makes it unrealistic to deploy \cite{dyer2013protocol, Dyer2015}. 

We propose observation-based FTE, as an alternative. We collect a substantial amount of traffic and record the unique observations for each field in the protocol. If Alice and Bob want to encode information using observation-based FTE with the protocol shown in Figure~\ref{fig:ofte}, they construct a lookup table with each field and an ordered list of observations. This lookup table is a shared out-of-band. High-entropy (encrypted) information can be encoded using observation-based FTE by construction a packet using observations from the host protocol. With the protocol in Figure~\ref{fig:ofte}, three bits can be encoded in the first field. When Alice receives the message and sees `observation 5' in `field 1', she uses the shared lookup table to determine the first three encoded bits are `101' (the binary value of the observation's index). The high-entropy input also ensures each field (and each packet) is independent of the other fields (and packets)$^3$.

\begin{figure}
\begin{center}
\includegraphics[scale=.5]{./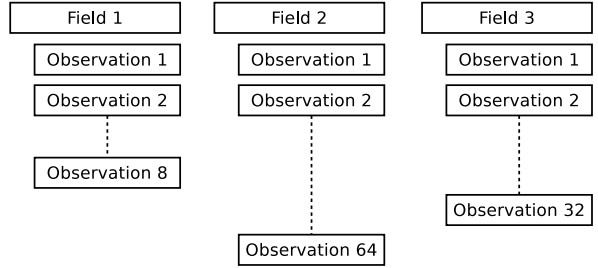}
\caption{Example protocol to illustrate observation-based FTE.} \label{fig:ofte}
\end{center}
\end{figure}

  Zhong et al.'s work \cite{zhong2015stealthy} used a primitive version of observation-based FTE that did not consider the upper bound on the information capacity of an FTE channel. 

\begin{thm}  For a given protocol, the maximum amount of information that can be encoded in a packet using observation-based FTE is given by:
\begin{equation} \label{eq:total_size}
\textrm{S} = \sum_{ \gamma_i \in \Gamma}\textrm{log}_2(|\boldsymbol \gamma_i|)
\end{equation}
Where $\Gamma = \{\gamma_1, \gamma_2, ..., \gamma_n\}$ is the set of $n$ fields in the protocol, and $|\gamma_i|$ is the number of unique observations in that field.
\end{thm}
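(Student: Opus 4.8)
The plan is to reduce the claim to a counting argument fused with the standard information-theoretic fact that a uniform choice among $N$ distinct messages carries exactly $\log_2 N$ bits. First I would model a packet as an ordered tuple $(o_1, o_2, \ldots, o_n)$, where $o_i$ is the observation placed in field $\gamma_i$, drawn from the $|\gamma_i|$ unique observations recorded for that field during collection. Under observation-based FTE the encoder is free to select any recorded observation for each field independently, so the set $\mathcal{P}$ of distinct packets that can be constructed is precisely the Cartesian product of the per-field observation sets.

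Next I would count $|\mathcal{P}|$. By the multiplication principle, the number of distinct tuples is $|\mathcal{P}| = \prod_{i=1}^{n} |\gamma_i|$, provided the fields can be assigned independently. This independence is exactly what the high-entropy (encrypted) input guarantees, as noted in the construction: because the input bits are statistically independent and uniform, the induced distribution over observations factorizes across fields and is uniform within each field. The information-theoretic step is then immediate, since the maximum information encodable in a single symbol drawn from an alphabet of size $N$ is $\log_2 N$ bits, attained exactly by the uniform distribution (the maximum-entropy property of the uniform law on a finite set). Taking $N = |\mathcal{P}|$ gives the capacity of one packet:
\[
S = \log_2|\mathcal{P}| = \log_2 \prod_{i=1}^{n}|\gamma_i| = \sum_{i=1}^{n}\log_2|\gamma_i| = \sum_{\gamma_i \in \Gamma}\log_2(|\gamma_i|),
\]
which is the claimed expression.

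The main obstacle is not the algebra but justifying the two assumptions that make the bound tight rather than merely an upper estimate. The upper bound $S \le \sum_i \log_2|\gamma_i|$ holds unconditionally by subadditivity of Shannon entropy: the joint entropy of the fields never exceeds the sum of their marginal entropies, each of which is itself at most $\log_2|\gamma_i|$. Achievability, however, requires both that every combination of field observations yields a \emph{valid} packet (so that $|\mathcal{P}|$ is the full product and is not shrunk by inter-field constraints in the host protocol) and that the encoder can realize a uniform, independent distribution over those observations. The high-entropy input assumption supplies the latter; the former is a modeling assumption about the protocol that I would state explicitly. I would also flag that $\log_2|\gamma_i|$ is generally non-integer, so $S$ is an information-theoretic capacity approached by coding across many packets rather than the literal bit-count of the naive single-field index encoding illustrated in the toy protocol.
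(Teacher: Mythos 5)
Your proposal is correct and follows essentially the same route as the paper: both arguments hinge on the high-entropy (encrypted) input making every observation in a field equally likely, so each field contributes its maximum entropy $\log_2(|\gamma_i|)$, and the packet total is the sum over fields. The only substantive difference is that you justify the additivity step explicitly --- subadditivity of joint entropy for the upper bound, and independence plus validity of all field combinations for achievability --- whereas the paper simply sums the per-field entropies without comment; your version closes that small gap but does not change the underlying argument.
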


\begin{proof}
The maximum amount of information that can be encoded in a particular field using observation-based FTE is given by the Shannon entropy of that field. 

\begin{equation} \label{eq:entropy}
H(\gamma_i) = -\sum_{x \in \gamma_i}p(x)\textrm{log}_2(p(x))
\end{equation}

Each stream of $n$ bits is equally likely\footnote{Since the data being mapped to the protocol is encrypted using AES encryption and AES produces a high-entropy bitstream \cite{lyda2007using}, we can assume 0 and 1 are equally likely in practice.}. Therefore, the choice of each observation is equally likely. This simplifies Equation~\eqref{eq:entropy} as follows.

\begin{equation} \label{eq:entropy}
\begin{split}
H(\gamma_i) &= -\sum_{x \in \gamma_i}\frac{1}{|\gamma_i|}\textrm{log}_2\left(\frac{1}{|\gamma_i|}\right) \\
&= -|\gamma_i|\frac{1}{|\gamma_i|}\textrm{log}_2\left(\frac{1}{|\gamma_i|}\right)\\
&= -\textrm{log}_2\left(\frac{1}{|\gamma_i|}\right)\\
&= \textrm{log}_2(|\gamma_i|)
\end{split}
\end{equation}

The amount of information that can be encoded in a single packet is the sum of the information that can be encoded in each field in the packet.

\begin{equation} \label{eq:total_size}
\textrm{S} = \sum_{\gamma_i \in \Gamma}\textrm{log}_2(|\gamma_i|)
\end{equation}
\end{proof}

Performing these calculation on the Synchrophasor protocol yields 516 bits that can be encoded in a single UDP packet. Since this is smaller than the typical TCP packet, it is necessary to segment TCP packets for transmission. The optimal average goodput (G\textsubscript{avg}) can be calculated with Equation~\eqref{eq:goodput}, where S is found using Equation~\eqref{eq:total_size}, and T\textsubscript{avg} is the average interpacket delay, which is 0.03334 seconds for Synchrophasor traffic. This yields an theoretical average goodput of 15,477 bits per second.

\begin{equation}\label{eq:goodput}
\textrm{G}_\textrm{avg} = \frac{\textrm{S}}{\textrm{T}_\textrm{avg}}
\end{equation}

\begin{figure*}[!t]
\begin{center}
\includegraphics[scale=1]{./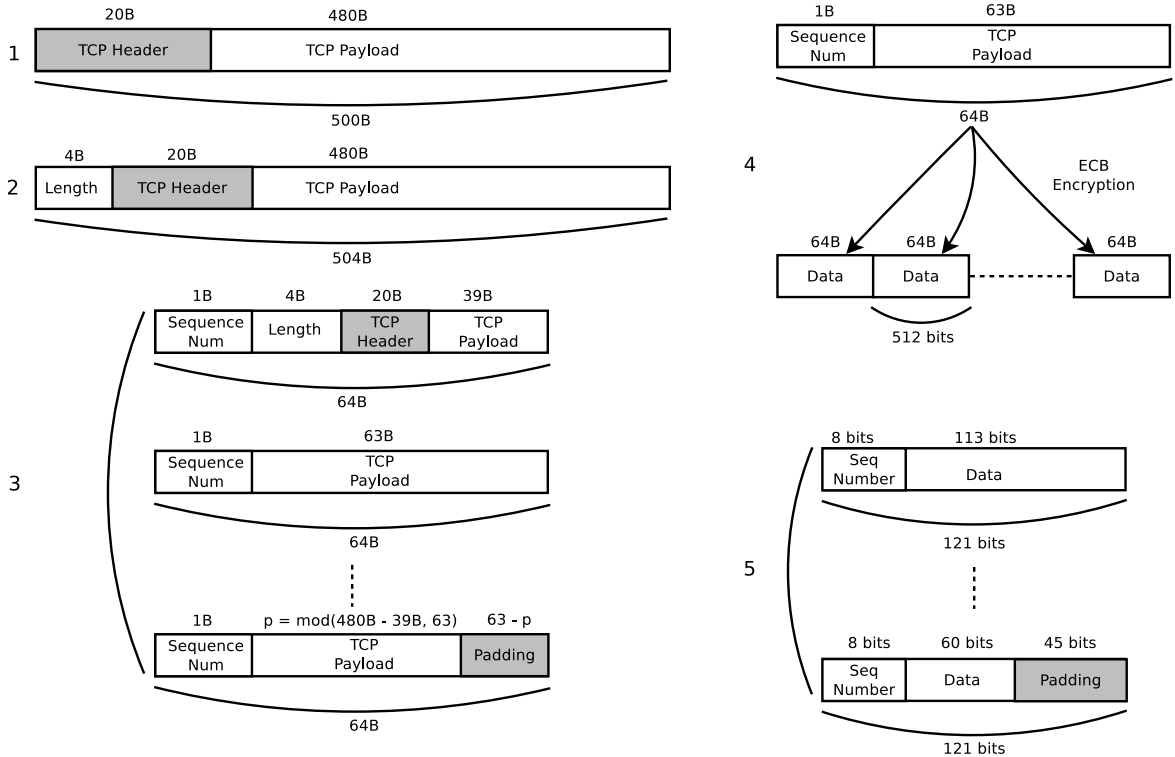}
\caption{Process for segmenting TCP packets for transmission.} \label{fig:segmentation}
\end{center}
\end{figure*}

Segmentation is shown in Figure~\ref{fig:segmentation}, where each item below is indicated in the figure:

\begin{enumerate}
\item The original TCP packet is taken.
\item The packet length is prepended to the beginning of packet as a four byte unsigned integer.
\item The packet is broken into 63 byte chunks, and each chunk is prepended with a one byte sequence numbers for a total of 64 bytes. The sequence number allows the chunks to be reassembled later into the original TCP packet. Depending on the size of the packet, it is possible there will not be enough payload data to fill the final chunk. In this case, random data is appended to the end.
\item Each 64 byte chunk is encrypted with Electronic Code Book (ECB) encryption\footnote{ECB is used because packets may arrive out of order, which makes cipher block chain impractical.}.
\item The observation-based FTE encodes each 64 byte (512 bit) chunk into a 516 bit UDP payloads using the method previously discussed.
\end{enumerate}

\subsection{Packet Timing}\label{sec:arch-timing}
The HMM timing model described in Section~\ref{sec:hmm} was given to the Protocol Proxy, which queries the timing model for a timing value. When the model is queried, it examines its current state, yields an output expression based on the probability distribution of the current state, and chooses a timing value from the output expression group. The model advances to the chosen state. The Protocol Proxy waits for the allotted time before sending a packet. If there are no packets to send, random data is encoded  and sent. These packets are dropped by the server. Sending placeholder packets ensures the correct timing model is emulated when there are no packets to transmit.

This approach to packet timing differs from the NDN covert channel proposed in \cite{ambrosin2014covert}. Our approach does not convey any information using interpacket delays. Interpacket delays can often be an indicator of a covert channel, so we ensure the timing of the covert channel is statistically identical to the host protocol.

\subsection{Protocol Proxy} \label{sec:arch-converter}
The Protocol Proxy combines the previous elements to create a covert channel using the host protocol.

\begin{figure*}[!t]
\begin{center}
\includegraphics[width=1\textwidth]{./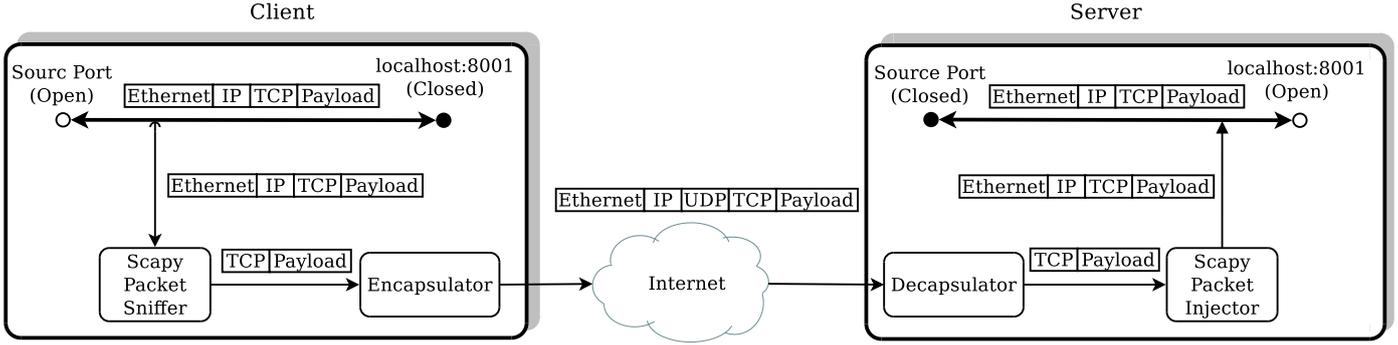}
\caption{Architecture for the Protocol Proxy transport converter.} \label{fig:converter}
\end{center}
\end{figure*}

We use Scapy \cite{scapy}\footnote{Scapy is a Python library for packet capture, manipulation, and injection} to capture TCP packets. These captured packet includes both the TCP header and payload. Applications using the Protocol Proxy are configured to send traffic to a closed loopback port, and Scapy sniffs the loopback interface looking for traffic destined to that port.  By default, when a closed port receives a TCP packet, it responds with a TCP RST (reset) packet. The RST packet immediately terminates the connection and ceases all communication with the other host. This is a low-level response dictated by the host firewall. The transport converter requires this RST packet not be sent, so the internal firewall (\verb!iptables! on Linux) was modified accordingly. The following command disables RST packets for all ports on Linux.

\lstinputlisting[style=c_text,caption=The iptables command to disable RST packets.,label=lst:iptables]{./code/setup-iptables.sh}

In Figure~\ref{fig:converter}, an application on the client sends a packet to a closed local port (8001 in our example). Scapy sniffs the network stack and captures the entire packet.  The Ethernet and IP layers are stripped, and the packet is transformed into a UDP payload using observation-based FTE, as described in Section~\ref{sec:arch-fte}. After transforming the packet into the host protocol's payload, new Ethernet, IP, and UDP layers are generated so the packet can be sent across the network to the server.  All the UDP packets are sent to the port that corresponds to the emulated protocol. The timing of outgoing packets is determined by the HMM described in Section~\ref{sec:arch-timing}.

On the server, the Protocol Proxy listens on the predetermined port for the UDP packet. Once it receives the UDP packet, it takes the UDP payload, reverses the observation-base FTE transformation described in Section~\ref{sec:arch-fte}, and sends it to a Scapy packet injector. The Scapy packet injector creates new Ethernet and IP layers to make the packet look like it originated from the local machine. Then, Scapy injects the new packet into the local network stack, and it is received by the local application. This provides a bidirectional covert channel that uses actual observations from host protocol.

Since the Protocol Proxy is accessed via an open port, it is trivial to integrate with Tor. Tor natively supports tunneling outgoing connections through a SOCKS proxy, so the Tor client is configured to use the Protocol Proxy port as a SOCKS proxy. On the server, a SOCKS server listens for traffic forwarded from the Protocol Proxy.  

\section{Experiment Setup} \label{sec:exp}

\begin{figure*}
\centering
\includegraphics[scale=0.7]{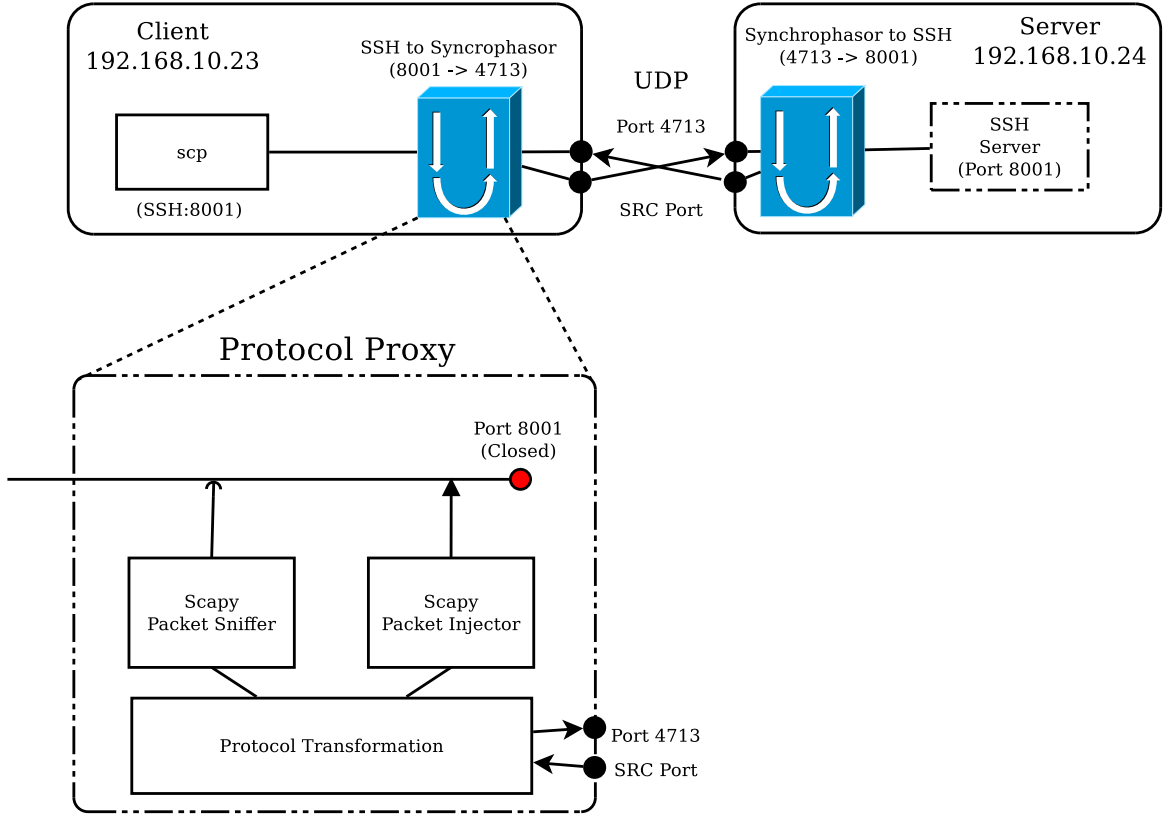}
\caption{The Protocol Proxy integrated for use with SCP.}
\label{fig:complete-system}
\end{figure*}

Over 770,000 samples were collected from the PMUs in Clemson's RTPIS Laboratory \cite{powerlab}. These samples were used to model the timing of the protocol as described in Section~\ref{sec:arch-timing}.  All testing was done in Clemson's security lab with clean installations of Arch Linux (kernel version 4.17.2-1). The experiment setup is shown in Figure~\ref{fig:complete-system}. We used \verb!scp! to transfer data over the Protocol Proxy to an SSH server on a remote machine. The Protocol Proxy server was launched using the following command.  

\begin{lstlisting}[style=c_bash]
server# protocol_proxy server 192.168.10.23 8001
\end{lstlisting}

Since the Protocol Proxy requires access to raw sockets, it must be executed by a privileged user. The `server' option tells the Protocol Proxy to expect packets originating from the specified port (8001). The IP address (192.168.10.23) is the IP address of the client that will connect to the server. The next step is to ensure the kernel does not send a reset packet (TCP RST) when packets are sent to the closed port (8001). This is done by executing the \verb!iptables! command shown in Listing~\ref{lst:iptables}. Next, the SSH server is set to listen to port 8001 for incoming connections in the \verb!/etc/ssh/sshd_config! file.  SSH (OpenSSH\_7.7p1) was used for the server. It was necessary to configure a non-standard port to avoid conflict when forwarding the traffic through the Protocol Proxy. The client was launched with the following command.

\begin{lstlisting}[style=c_bash]
client# protocol_proxy client 192.168.10.24 8001
\end{lstlisting}

Again, the application must be executed by a privileged user. The `client' option tells the transport to expect packets destined for the specified port (8001). The IP address (192.168.10.24) is the IP address of the host executing the program. As with the server, the client does not open the port, so the rules in Listing~\ref{lst:iptables} must be applied to the client to ensure TCP connections are not prematurely terminated.

A one kilobyte data file was transferred from the client to the server using \verb!scp! as shown below.

\begin{lstlisting}[style=c_bash]
client# scp -P 8001 file 127.0.0.1:file
\end{lstlisting}

The destination port was set to 8001 on the client, which was the local port being forwarded to the Protocol Proxy. The traffic between the client to the server was captured, and another HMM was inferred from this generated traffic. This second HMM was compared via the $\chi^2$-test to the original HMM used by the Protocol Proxy. Finally, the baseline goodput was measured by recording the time required to transfer the 1 kilobyte data file.

\section{Results} \label{sec:results}
Figure~\ref{fig:original-50k-hist-labeled} shows the histogram of interpacket delay times for the Synchrophasor traffic captured in Clemson's RTPIS laboratory.  The output expressions are labeled in the histogram according to the prominent peaks. Using the techniques described in Section~\ref{sec:hmm}, the deterministic HMM in Figure~\ref{fig:hmm-original} was inferred.

\begin{figure}
\centering
\includegraphics[scale=0.425]{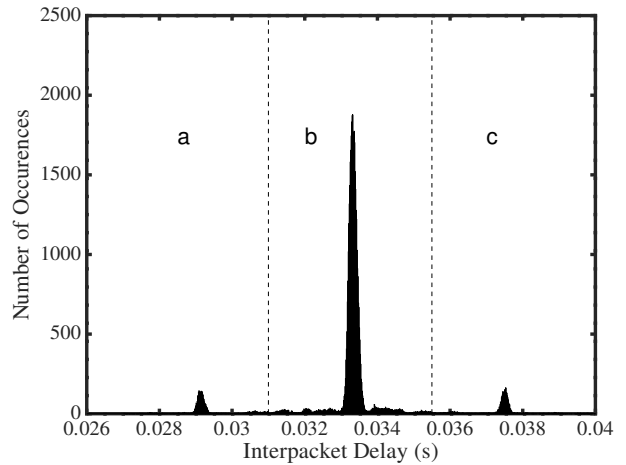}
\caption{Histogram of the interpacket delay of real Synchrophasor traffic with states labeled.}
\label{fig:original-50k-hist-labeled}
\end{figure}

\begin{figure}
\centering
\includegraphics[scale=0.75]{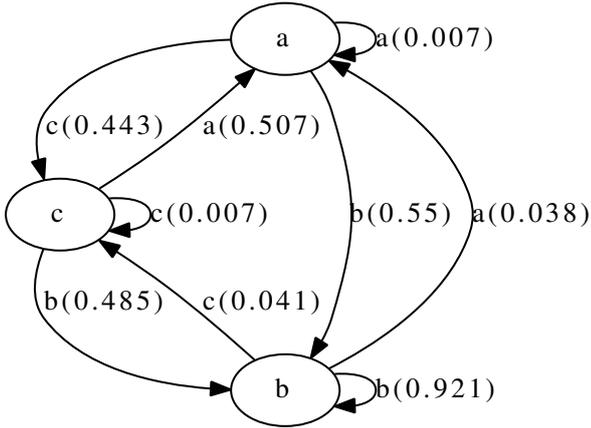}
\caption{HMM generated from the interpacket delay of Synchrophasor traffic.}
\label{fig:hmm-original}
\end{figure}

With this HMM, it was then possible to generate Synchrophasor traffic with observation-based FTE and accurate timing. Figure~\ref{fig:wireshark-traffic} shows a Wireshark deconstruction of the traffic generated with our Protocol Proxy. Wireshark correctly identifies the Protocol Proxy traffic as Synchrophasor traffic and is able to parse the field values from the payload. The checksum is also correctly calculated. This FTE-generated PMU traffic is accepted by the Phasor Data Concentrators--the hardware PMU datastore \cite{zhong2015stealthy}. Since packets generated by the Protocol Proxy only use previously observed field values, the Protocol Proxy is syntactically equivalent to the host protocol. Any rule that would detect Protocol Proxy traffic would have false-positives on legitimate PMU traffic, and these false-positives would lead to significant collateral damage.

\begin{figure}
\centering
\includegraphics[scale=0.5]{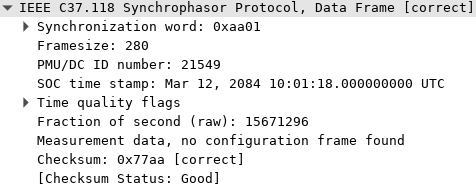}
\caption{Wireshark decoding of the Protocol Proxy traffic.}
\label{fig:wireshark-traffic}
\end{figure}

Figure~\ref{fig:trial6-hist-labeled} shows the histogram of interpacket delay times for the generated traffic with the output expressions labeled.  Figure~\ref{fig:hmm-generated} shows the deterministic HMM inferred from the histogram to model the timing patterns of the Protocol Proxy traffic. Visually, this model appears almost identical to the model used to generate the traffic.

\begin{figure}
\centering
\includegraphics[scale=0.425]{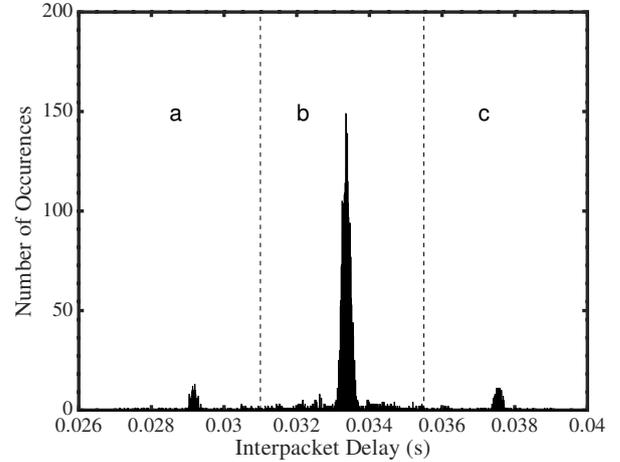}
\caption{Histogram of the interpacket delay of generated Synchrophasor traffic with states labeled.}
\label{fig:trial6-hist-labeled}
\end{figure}

\begin{figure}
\centering
\includegraphics[scale=0.75]{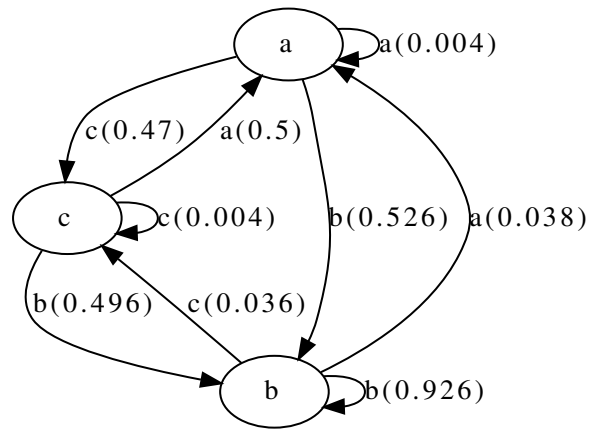}
\caption{HMM generated from the interpacket delay of the generated Synchrophasor traffic.}
\label{fig:hmm-generated}
\end{figure}

Before determining if the two deterministic HMMs were equal, the two-sample KS test was applied to the two distributions (shown in Figure~\ref{fig:original-50k-hist-labeled} and Figure~\ref{fig:trial6-hist-labeled}). To apply this test, we randomly sample each distribution 100 times and apply the test using the two sets of samples. The p-value for the two-sample KS test was found to be 0.21, so with a threshold of 0.05, we fail to reject the null hypothesis. The interpacket delay times of the Protocol Proxy are from the probability distribution as the interpacket delay times of the original Synchrophasor traffic.

To determine if the two deterministic HMMs were equal, the HMMs were checked for state-wise equality using the $\chi^2$ test for homogeneity. The p-values for the $\chi^2$ test are shown in Table~\ref{tab:x2-test}. The first comparison (inferred-inferred) infers two HMMs using 10,000 samples and a random starting point in the original traffic. From these values, we fail to reject the null hypothesis (with an $\alpha$ value of $0.05$) for every state and are left to conclude the traffic is \textit{homogeneous}, which means it does not change over time. The second comparison (generated-inferred) infers one HMM from the Protocol Proxy traffic and another HMM from the original Synchrophasor traffic.  From these values, we fail to reject the null hypothesis (with an $\alpha$ value of $0.05$) for every state and are left to conclude the traffic from the Protocol Proxy is equivalent to the \textit{homogeneous} Synchrophasor traffic.

\begin{table}
\small
\centering
\caption{State-wise $\chi^2$ test for homogeneity comparing HMMs.}
\label{tab:x2-test}
\begin{tabular}{|c|c|c|}
\hline
State Comparison & Inferred-Inferred & Generated-Inferred \\
& (p-value) &(p-value)\\
\hline
\hline
a-a & 0.75 & 0.82 \\
\hline
b-b & 0.19 & 0.37 \\
\hline
c-c & 0.06 & 0.15 \\
\hline
\end{tabular}
\end{table}

The baseline goodput (link speed) was determined to be 54 Mbps, while the goodput through the PMU Protocol Proxy was found to be 182.2 bits per second. These values are compared to the theoretical goodput in Table~\ref{tab:goodput}. The difference between theoretical and observed goodput is attributed to retransmission and packet overhead (the TCP header is sent through the Protocol Proxy). We measured the goodput through Tor to be around 7.31 Mbps using an online speed test\footnote{\url{https://speedof.me}}.

\begin{table}
\centering
\caption{ Comparison of observed and theoretical goodput through the Protocol Proxy.}
\label{tab:goodput}
\begin{tabular}{|c|c|c|c|}
\cline{2-4}
\multicolumn{1}{c|}{}&Baseline & Theoretical & Observed\\
\cline{2-4}
\hline
Goodput & 54 Mbps & 15,477 bps & 182 bps \\
\hline
\end{tabular}
\end{table}

\section{Conclusion and Future Work} \label{sec:conclusion}
Covert communication techniques must evade several types of threats: rule-based detection, DPI, stateful protocol analysis, side-channel analysis, and statistical analysis.  To address these issues, we presented a novel approach for tunneling TCP traffic through \textit{protected static} protocols. Protected protocols have collateral damage associated with false positives, and static protocols are deterministic UDP-based protocols whose pattern never changes. To accomplish this, we introduced (1) an architecture to convert TCP traffic to UDP traffic\footnote{The conversion from TCP to UDP is not intended to provide an additional layer of security. It is necessary because the host protocol is UDP-based, and it is an open challenge in PT development.}, (2) observation-based FTE to mimic the host protocol's payload, and (3) a deterministic HMM to model the host protocol's interpacket timing. The Wireshark packet capture in Figure~\ref{fig:wireshark-traffic} illustrates objectives (1) and (2)--Protocol Proxy traffic is syntactically equivalent to the host protocol. The $\chi^2$ test shows the timing generated by the Protocol Proxy is statistically equivalent to the host protocol.

This extends the simulation present in \cite{zhong2015stealthy} by developing an architecture to perform the protocol transformations in real-time. The Protocol Proxy provides a universal interface for connecting applications. A theoretical upper-bound for the information capacity of an observation-based FTE channel was derived, which facilitated the improvement of the previous observation-based FTE implementation. During testing, it was found some Linux distributions were incapable of emulating the timing of the host protocol. Arch Linux was chosen for its ability to consistently emulate timing. The cause of this discrepancy is currently being investigated. The Protocol Proxy does not transmit information using the interpacket delay as with the NDN-based covert channel \cite{ambrosin2014covert}, and the presence of NDN traffic itself could be enough to indicate a covert channel.

Future work will provide an in-depth security analysis of the Protocol Proxy, decrease the overhead associated with packet retransmissions, and adapt the Protocol Proxy to Tor's PT version 2.0 specification \cite{ptv2}. While the Protocol Proxy's goodput was significantly slower than Tor, the use-case is situations when any anomalous traffic could have negative repercussions. This trade-off between performance and detection is justified in those cases. We will also investigate a number of performance improvements. For instance, the entire TCP packet is currently transmitted through the proxy, but in reality much less data is required for most packets. The Protocol Proxy approach may be vulnerable to semantic analysis using probabilistic context-free grammars, but these techniques are not currently used in the wild for scalability reasons.

The Protocol Proxy  is a viable alternative to traditional transports when heightened anonymity is required.  While there are a number of improvements that will increase throughput, these preliminary results show  it is possible to tunnel a TCP session through a UDP protocol and maintain TCP's guaranteed delivery. Observation-based FTE extends this and ensures the traffic will not be detected by rule-based analysis or DPI. Furthermore, a deterministic HMM ensures the Protocol Proxy accurately models interpacket timing and avoids detection by side-channel analysis. Finally, the choice of a \textit{protected static} protocol ensures  stateful protocol analysis is useless while raising the collateral damage associated with false positives. 

\section*{Acknowledgment}
This material is based upon work supported by, or in part by, the National Science Foundation grants CNS-1049765, OAC-1547245, and CNS-1544910. The U.S. Government is authorized to reproduce and distribute reprints for Governmental purposes notwithstanding any copyright notation thereon. The authors gratefully acknowledge this support and take responsibility for the contents of this report. The views and conclusions contained herein are those of the authors and should not be interpreted as necessarily representing the official policies or endorsements, either expressed or implied, of the National Science Foundation, or the U.S. Government.

\bibliographystyle{IEEEtran}
\bibliography{ProtocolProxy_references}

\begin{thebibliography}{10}
\providecommand{\url}[1]{#1}
\csname url@samestyle\endcsname
\providecommand{\newblock}{\relax}
\providecommand{\bibinfo}[2]{#2}
\providecommand{\BIBentrySTDinterwordspacing}{\spaceskip=0pt\relax}
\providecommand{\BIBentryALTinterwordstretchfactor}{4}
\providecommand{\BIBentryALTinterwordspacing}{\spaceskip=\fontdimen2\font plus
\BIBentryALTinterwordstretchfactor\fontdimen3\font minus
  \fontdimen4\font\relax}
\providecommand{\BIBforeignlanguage}[2]{{%
\expandafter\ifx\csname l@#1\endcsname\relax
\typeout{** WARNING: IEEEtran.bst: No hyphenation pattern has been}%
\typeout{** loaded for the language `#1'. Using the pattern for}%
\typeout{** the default language instead.}%
\else
\language=\csname l@#1\endcsname
\fi
#2}}
\providecommand{\BIBdecl}{\relax}
\BIBdecl

\bibitem{heydari2017scalable}
V.~Heydari, S.-i. Kim, and S.-M. Yoo, ``Scalable anti-censorship framework
  using moving target defense for web servers,'' \emph{IEEE Transactions on
  Information Forensics and Security}, vol.~12, no.~5, pp. 1113--1124, 2017.

\bibitem{Dingledine2011}
R.~Dingledine, ``Ten ways to discover {Tor} bridges,'' Technical Report
  2011-10-002, The Tor Project, October 2011. https://research. torproject.
  org/techreports/tenwaysdiscover-tor-bridges-2 11-1-31. pdf, Tech. Rep., 2011.
  4, 10, Tech. Rep., 2011.

\bibitem{garnaev2016security}
A.~Garnaev, M.~Baykal-Gursoy, and H.~V. Poor, ``Security games with unknown
  adversarial strategies,'' \emph{IEEE transactions on cybernetics}, vol.~46,
  no.~10, pp. 2291--2299, 2016.

\bibitem{obfs4}
\BIBentryALTinterwordspacing
``obfs4 - the obfourscator.'' [Online]. Available:
  \url{https://github.com/Yawning/obfs4}
\BIBentrySTDinterwordspacing

\bibitem{wiley2011dust}
B.~Wiley, ``Dust: A blocking-resistant internet transport protocol,''
  \emph{Technical rep ort. http://blanu. net/Dust. pdf}, 2011.

\bibitem{dyer2013protocol}
K.~P. Dyer, S.~E. Coull, T.~Ristenpart, and T.~Shrimpton, ``Protocol
  misidentification made easy with format-transforming encryption,'' in
  \emph{Proceedings of the 2013 ACM SIGSAC conference on Computer \&
  communications security}.\hskip 1em plus 0.5em minus 0.4em\relax ACM, 2013,
  pp. 61--72.

\bibitem{Dyer2015}
K.~P. Dyer, S.~E. Coull, and T.~Shrimpton, ``Marionette: A programmable network
  traffic obfuscation system.'' in \emph{USENIX Security Symposium}, 2015, pp.
  367--382.

\bibitem{guan2001netcamo}
Y.~Guan, X.~Fu, D.~Xuan, P.~U. Shenoy, R.~Bettati, and W.~Zhao, ``Netcamo:
  camouflaging network traffic for qos-guaranteed mission critical
  applications,'' \emph{IEEE Transactions on Systems, Man, and Cybernetics-Part
  A: Systems and Humans}, vol.~31, no.~4, pp. 253--265, 2001.

\bibitem{fu2016covert}
Y.~Fu, Z.~Jia, L.~Yu, X.~Zhong, and R.~Brooks, ``A covert data transport
  protocol,'' in \emph{2016 11th International Conference on Malicious and
  Unwanted Software (MALWARE)}.\hskip 1em plus 0.5em minus 0.4em\relax IEEE,
  2016, pp. 1--8.

\bibitem{fu2017stealthy}
Y.~Fu, L.~Yu, O.~Hambolu, I.~Ozcelik, B.~Husain, J.~Sun, K.~Sapra, D.~Du, C.~T.
  Beasley, and R.~R. Brooks, ``Stealthy domain generation algorithms,''
  \emph{IEEE Transactions on Information Forensics and Security}, vol.~12,
  no.~6, pp. 1430--1443, 2017.

\bibitem{zhong2015stealthy}
X.~Zhong, Y.~Fu, L.~Yu, R.~Brooks, and G.~K. Venayagamoorthy, ``Stealthy
  malware traffic-not as innocent as it looks,'' in \emph{Malicious and
  Unwanted Software (MALWARE), 2015 10th International Conference on}.\hskip
  1em plus 0.5em minus 0.4em\relax IEEE, 2015, pp. 110--116.

\bibitem{fridrich2006minimizing}
J.~Fridrich, ``Minimizing the embedding impact in steganography,'' in
  \emph{Proceedings of the 8th workshop on Multimedia and security}.\hskip 1em
  plus 0.5em minus 0.4em\relax ACM, 2006, pp. 2--10.

\bibitem{schwier2011methods}
J.~M. Schwier, R.~R. Brooks, and C.~Griffin, ``Methods to window data to
  differentiate between markov models,'' \emph{IEEE Transactions on Systems,
  Man, and Cybernetics, Part B (Cybernetics)}, vol.~41, no.~3, pp. 650--663,
  2011.

\bibitem{lu2013normalized}
C.~Lu, J.~M. Schwier, R.~M. Craven, L.~Yu, R.~R. Brooks, and C.~Griffin, ``A
  normalized statistical metric space for hidden markov models,'' \emph{IEEE
  transactions on cybernetics}, vol.~43, no.~3, pp. 806--819, 2013.

\bibitem{powerlab}
(2019) Real-time power and intelliigent systems (rtpis) laboratory.
  Http://rtpis.org/.

\bibitem{smith2010predictable}
R.~W. Smith and S.~G. Knight, ``Predictable three-parameter design of network
  covert communication systems,'' \emph{IEEE Transactions on Information
  Forensics and Security}, vol.~6, no.~1, pp. 1--13, 2010.

\bibitem{kiyavash2013timing}
N.~Kiyavash, F.~Koushanfar, T.~P. Coleman, and M.~Rodrigues, ``A timing channel
  spyware for the csma/ca protocol,'' \emph{IEEE Transactions on Information
  Forensics and Security}, vol.~8, no.~3, pp. 477--487, 2013.

\bibitem{yao2009study}
L.~Yao, X.~Zi, L.~Pan, and J.~Li, ``A study of on/off timing channel based on
  packet delay distribution,'' \emph{Computers \& Security}, vol.~28, no.~8,
  pp. 785--794, 2009.

\bibitem{zhang2014named}
L.~Zhang, A.~Afanasyev, J.~Burke, V.~Jacobson, P.~Crowley, C.~Papadopoulos,
  L.~Wang, B.~Zhang \emph{et~al.}, ``Named data networking,'' \emph{ACM SIGCOMM
  Computer Communication Review}, vol.~44, no.~3, pp. 66--73, 2014.

\bibitem{tsudik2016ac3n}
G.~Tsudik, E.~Uzun, and C.~A. Wood, ``Ac3n: Anonymous communication in
  content-centric networking,'' in \emph{2016 13th IEEE Annual Consumer
  Communications \& Networking Conference (CCNC)}.\hskip 1em plus 0.5em minus
  0.4em\relax IEEE, 2016, pp. 988--991.

\bibitem{cui2016defend}
X.~Cui, Y.~H. Tsang, L.~C. Hui, S.~Yiu, and B.~Luo, ``Defend against internet
  censorship in named data networking,'' in \emph{2016 18th International
  Conference on Advanced Communication Technology (ICACT)}.\hskip 1em plus
  0.5em minus 0.4em\relax IEEE, 2016, pp. 300--305.

\bibitem{ambrosin2014covert}
M.~Ambrosin, M.~Conti, P.~Gasti, and G.~Tsudik, ``Covert ephemeral
  communication in named data networking,'' in \emph{Proceedings of the 9th ACM
  symposium on Information, computer and communications security}.\hskip 1em
  plus 0.5em minus 0.4em\relax ACM, 2014, pp. 15--26.

\bibitem{mosko2014ccnx}
M.~Mosko, ``Ccnx 1.0 protocol introduction,'' 2014.

\bibitem{tor}
\BIBentryALTinterwordspacing
``Tor.'' [Online]. Available: \url{https://www.torproject.org/}
\BIBentrySTDinterwordspacing

\bibitem{winter2012great}
P.~Winter and S.~Lindskog, \emph{How the great firewall of china is blocking
  tor}.\hskip 1em plus 0.5em minus 0.4em\relax USENIX-The Advanced Computing
  Systems Association, 2012.

\bibitem{ptv2}
Internews, ``New pluggable transport specification version 2.0, draft 2 is
  out,'' 2017.

\bibitem{pluggable-transports}
\BIBentryALTinterwordspacing
``Pluggable {T}ransports.'' [Online]. Available:
  \url{https://www.pluggabletransports.info/}
\BIBentrySTDinterwordspacing

\bibitem{Mohajer2012}
H.~Mohajeri~Moghaddam, B.~Li, M.~Derakhshani, and I.~Goldberg, ``Skypemorph:
  Protocol obfuscation for tor bridges,'' in \emph{Proceedings of the 2012 ACM
  conference on Computer and communications security}.\hskip 1em plus 0.5em
  minus 0.4em\relax ACM, 2012, pp. 97--108.

\bibitem{Weinberg2012}
Z.~Weinberg, J.~Wang, V.~Yegneswaran, L.~Briesemeister, S.~Cheung, F.~Wang, and
  D.~Boneh, ``Stegotorus: a camouflage proxy for the tor anonymity system,'' in
  \emph{Proceedings of the 2012 ACM conference on Computer and communications
  security}.\hskip 1em plus 0.5em minus 0.4em\relax ACM, 2012, pp. 109--120.

\bibitem{houmansadr2013parrot}
A.~Houmansadr, C.~Brubaker, and V.~Shmatikov, ``The parrot is dead: Observing
  unobservable network communications,'' in \emph{Security and Privacy (SP),
  2013 IEEE Symposium on}.\hskip 1em plus 0.5em minus 0.4em\relax IEEE, 2013,
  pp. 65--79.

\bibitem{obfsproxy}
\BIBentryALTinterwordspacing
``obfsproxy.'' [Online]. Available:
  \url{https://github.com/NullHypothesis/obfsproxy}
\BIBentrySTDinterwordspacing

\bibitem{winter2013scramblesuit}
P.~Winter, T.~Pulls, and J.~Fuss, ``Scramblesuit: A polymorphic network
  protocol to circumvent censorship,'' in \emph{Proceedings of the 12th ACM
  workshop on Workshop on privacy in the electronic society}.\hskip 1em plus
  0.5em minus 0.4em\relax ACM, 2013, pp. 213--224.

\bibitem{moshchuk2008flashproxy}
A.~Moshchuk, S.~D. Gribble, and H.~M. Levy, ``Flashproxy: transparently
  enabling rich web content via remote execution,'' in \emph{Proceedings of the
  6th international conference on Mobile systems, applications, and
  services}.\hskip 1em plus 0.5em minus 0.4em\relax ACM, 2008, pp. 81--93.

\bibitem{snowflake}
\BIBentryALTinterwordspacing
``Introducing {S}nowflake (webrtc pt).'' [Online]. Available:
  \url{https://lists.torproject.org/pipermail/tor-dev/2016-January/010310.html}
\BIBentrySTDinterwordspacing

\bibitem{fifield2015blocking}
D.~Fifield, C.~Lan, R.~Hynes, P.~Wegmann, and V.~Paxson, ``Blocking-resistant
  communication through domain fronting,'' \emph{Proceedings on Privacy
  Enhancing Technologies}, vol. 2015, no.~2, pp. 46--64, 2015.

\bibitem{rios2013covert}
R.~Rios, J.~A. Onieva, and J.~Lopez, ``Covert communications through network
  configuration messages,'' \emph{Computers \& Security}, vol.~39, pp. 34--46,
  2013.

\bibitem{schmidt2018exploiting}
S.~Schmidt, W.~Mazurczyk, R.~Kulesza, J.~Keller, and L.~Caviglione,
  ``Exploiting ip telephony with silence suppression for hidden data
  transfers,'' \emph{Computers \& Security}, vol.~79, pp. 17--32, 2018.

\bibitem{Wustrow2014}
E.~Wustrow, C.~Swanson, and J.~A. Halderman, ``Tapdance: End-to-middle
  anticensorship without flow blocking.'' in \emph{USENIX Security Symposium},
  2014, pp. 159--174.

\bibitem{schuchard2012routing}
M.~Schuchard, J.~Geddes, C.~Thompson, and N.~Hopper, ``Routing around decoys,''
  in \emph{Proceedings of the 2012 ACM conference on Computer and
  communications security}.\hskip 1em plus 0.5em minus 0.4em\relax ACM, 2012,
  pp. 85--96.

\bibitem{yu2017tarn}
L.~Yu, Q.~Wang, G.~Barrineau, J.~Oakley, R.~R. Brooks, and K.-C. Wang, ``Tarn:
  A sdn-based traffic analysis resistant network architecture,'' in
  \emph{Malicious and Unwanted Software (MALWARE), 2017 12th International
  Conference on}.\hskip 1em plus 0.5em minus 0.4em\relax IEEE, 2017, pp.
  91--98.

\bibitem{gnunet}
\BIBentryALTinterwordspacing
``{GNU}'s framework for secure peer-to-peer networking.'' [Online]. Available:
  \url{https://gnunet.org/}
\BIBentrySTDinterwordspacing

\bibitem{i2p}
\BIBentryALTinterwordspacing
``The invisible internet project.'' [Online]. Available:
  \url{https://geti2p.net/en/}
\BIBentrySTDinterwordspacing

\bibitem{freenet}
\BIBentryALTinterwordspacing
``Freenet.'' [Online]. Available:
  \url{https://freenetproject.org/author/freenet-project-inc.html}
\BIBentrySTDinterwordspacing

\bibitem{kugler2003analysis}
D.~K{\"u}gler, ``An analysis of gnunet and the implications for anonymous,
  censorship-resistant networks,'' in \emph{International Workshop on Privacy
  Enhancing Technologies}.\hskip 1em plus 0.5em minus 0.4em\relax Springer,
  2003, pp. 161--176.

\bibitem{hoang2018empirical}
N.~P. Hoang, P.~Kintis, M.~Antonakakis, and M.~Polychronakis, ``An empirical
  study of the i2p anonymity network and its censorship resistance,'' in
  \emph{Proceedings of the Internet Measurement Conference 2018}.\hskip 1em
  plus 0.5em minus 0.4em\relax ACM, 2018, pp. 379--392.

\bibitem{i2p-threat}
\BIBentryALTinterwordspacing
(2010) {I2P's Threat Model}. [Online]. Available:
  \url{https://geti2p.net/en/docs/how/threat-model}
\BIBentrySTDinterwordspacing

\bibitem{roos2014measuring}
S.~Roos, B.~Schiller, S.~Hacker, and T.~Strufe, ``Measuring freenet in the
  wild: Censorship-resilience under observation,'' in \emph{International
  Symposium on Privacy Enhancing Technologies Symposium}.\hskip 1em plus 0.5em
  minus 0.4em\relax Springer, 2014, pp. 263--282.

\bibitem{iran}
\BIBentryALTinterwordspacing
R.~Brandom, ``Iran blocks encrypted messaging apps amid nationwide protests,''
  2018. [Online]. Available:
  \url{https://www.theverge.com/2018/1/2/16841292/iran-telegram-block-encryption-protest-google-signal}
\BIBentrySTDinterwordspacing

\bibitem{psiphon}
\BIBentryALTinterwordspacing
``Psiphon.'' [Online]. Available: \url{https://www.psiphon3.com/en/index.html}
\BIBentrySTDinterwordspacing

\bibitem{lantern}
\BIBentryALTinterwordspacing
``Lantern.'' [Online]. Available: \url{https://getlantern.org/en_US/}
\BIBentrySTDinterwordspacing

\bibitem{ultrasurf}
Ultrasurf. \url{https://ultrasurf.us/}.

\bibitem{kim2008internet}
H.~Kim, K.~C. Claffy, M.~Fomenkov, D.~Barman, M.~Faloutsos, and K.~Lee,
  ``Internet traffic classification demystified: myths, caveats, and the best
  practices,'' in \emph{Proceedings of the 2008 ACM CoNEXT conference}.\hskip
  1em plus 0.5em minus 0.4em\relax ACM, 2008, p.~11.

\bibitem{miller2014know}
B.~Miller, L.~Huang, A.~D. Joseph, and J.~D. Tygar, ``I know why you went to
  the clinic: Risks and realization of https traffic analysis,'' in
  \emph{International Symposium on Privacy Enhancing Technologies
  Symposium}.\hskip 1em plus 0.5em minus 0.4em\relax Springer, 2014, pp.
  143--163.

\bibitem{houmansadr2013want}
A.~Houmansadr, T.~J. Riedl, N.~Borisov, and A.~C. Singer, ``I want my voice to
  be heard: Ip over {Voice-over-IP} for unobservable censorship
  circumvention.'' in \emph{NDSS}, 2013.

\bibitem{zhong2015side}
X.~Zhong, P.~Arunagirinathan, A.~Ahmadi, R.~Brooks, and G.~K. Venayagamoorthy,
  ``Side-channels in electric power synchrophasor network data traffic,'' in
  \emph{Proceedings of the 10th Annual Cyber and Information Security Research
  Conference}.\hskip 1em plus 0.5em minus 0.4em\relax ACM, 2015, p.~3.

\bibitem{zhao2012detecting}
H.~Zhao and Y.-Q. Shi, ``Detecting covert channels in computer networks based
  on chaos theory,'' \emph{IEEE transactions on information forensics and
  security}, vol.~8, no.~2, pp. 273--282, 2012.

\bibitem{ozccelik2015deceiving}
{\.I}.~{\"O}z{\c{c}}elik and R.~R. Brooks, ``Deceiving entropy based dos
  detection,'' \emph{Computers \& Security}, vol.~48, pp. 234--245, 2015.

\bibitem{eddy1996hidden}
S.~R. Eddy, ``Hidden markov models,'' \emph{Current opinion in structural
  biology}, vol.~6, no.~3, pp. 361--365, 1996.

\bibitem{rabiner1989tutorial}
L.~R. Rabiner, ``A tutorial on hidden markov models and selected applications
  in speech recognition,'' \emph{Proceedings of the IEEE}, vol.~77, no.~2, pp.
  257--286, 1989.

\bibitem{asadi2016creating}
N.~Asadi, A.~Mirzaei, and E.~Haghshenas, ``Creating discriminative models for
  time series classification and clustering by hmm ensembles,'' \emph{IEEE
  transactions on cybernetics}, vol.~46, no.~12, pp. 2899--2910, 2016.

\bibitem{lu2012network}
C.~Lu, ``Network traffic analysis using stochastic grammars,'' 2012.

\bibitem{schwier2009pattern}
J.~Schwier, ``Pattern recognition for command and control data systems,'' 2009.

\bibitem{vanluyten2008equivalence}
B.~Vanluyten, J.~C. Willems, and B.~De~Moor, ``Equivalence of state
  representations for hidden markov models,'' \emph{Systems \& Control
  Letters}, vol.~57, no.~5, pp. 410--419, 2008.

\bibitem{griffin2011hybrid}
C.~Griffin, R.~R. Brooks, and J.~Schwier, ``A hybrid statistical technique for
  modeling recurrent tracks in a compact set,'' \emph{IEEE Transactions on
  Automatic Control}, vol.~56, no.~8, pp. 1926--1931, 2011.

\bibitem{schwier2009zero}
J.~M. Schwier, R.~R. Brooks, C.~Griffin, and S.~Bukkapatnam, ``Zero knowledge
  hidden markov model inference,'' \emph{Pattern Recognition Letters}, vol.~30,
  no.~14, pp. 1273--1280, 2009.

\bibitem{yu2013inferring}
L.~Yu, J.~M. Schwier, R.~M. Craven, R.~R. Brooks, and C.~Griffin, ``Inferring
  statistically significant hidden markov models,'' \emph{IEEE Transactions on
  Knowledge and Data Engineering}, vol.~25, no.~7, pp. 1548--1558, 2013.

\bibitem{kstest2}
\BIBentryALTinterwordspacing
\emph{Kolmogorov--Smirnov Test}.\hskip 1em plus 0.5em minus 0.4em\relax New
  York, NY: Springer New York, 2008, pp. 283--287. [Online]. Available:
  \url{https://doi.org/10.1007/978-0-387-32833-1_214}
\BIBentrySTDinterwordspacing

\bibitem{wireshark}
``{Wireshark},'' 2019, \url{https://www.wireshark.org/}.

\bibitem{Dyer2012}
K.~P. Dyer, S.~E. Coull, T.~Ristenpart, and T.~Shrimpton, ``Format-transforming
  encryption: More than meets the dpi.'' \emph{IACR Cryptology ePrint Archive},
  vol. 2012, p. 494, 2012.

\bibitem{lyda2007using}
R.~Lyda and J.~Hamrock, ``Using entropy analysis to find encrypted and packed
  malware,'' \emph{IEEE Security \& Privacy}, vol.~5, no.~2, pp. 40--45, 2007.

\bibitem{scapy}
P.~Biondi. (2008) Scapy. Https://scapy.net/.

\end{thebibliography}

\end{document}